\documentclass[article,oneside,12pt,a4paper]{memoir}

\usepackage[utf8]{inputenc}
\usepackage[T1]{fontenc}
\usepackage[shrink=10,stretch=10]{microtype}
\usepackage{amsmath,amssymb,amsthm}
\usepackage[british]{babel}
\usepackage[svgnames]{xcolor}
\usepackage{graphicx}
\usepackage{booktabs}
\usepackage{calc}
\usepackage[hidelinks]{hyperref}
\usepackage{tikz}
\usetikzlibrary{patterns}

\usepackage{enumitem}
\setlist{
  topsep=1ex,
  itemsep=-0.5ex,
}

\setlrmarginsandblock{3.6cm}{3.6cm}{*}
\setulmarginsandblock{3cm}{3cm}{*}
\checkandfixthelayout

\makeatletter
\def\@endtheorem{\endtrivlist}
\makeatother
\newtheorem{theorem}{Theorem}

\newtheorem{proposition}[theorem]{Proposition}
\newtheorem{corollary}[theorem]{Corollary}
\newtheorem{definition}[theorem]{Definition}

\newtheorem{example}{Example}
\newtheorem{algorithm}{Algorithm}

\hypersetup{
    pdftitle={On Steane-Enlargement of Quantum Codes from Cartesian Product Point Sets},
    pdfauthor={René Bødker Christensen and Olav Geil}
}

\captionnamefont{\footnotesize\bfseries}	
\captiontitlefont{\footnotesize}					
\captiondelim{. }
\setsecheadstyle{\bfseries\large}

\counterwithout{section}{chapter}
\setsecnumdepth{subsection}
\setsubsecheadstyle{\bfseries}

\DeclareMathOperator{\ev}{ev}
\DeclareMathOperator{\Span}{Span}
\renewcommand{\vec}[1]{\mathbf{\boldsymbol #1}}
\newcommand{\ph}{\phantom{0}}
\newcommand{\sep}{\unskip,\ }
\newcommand{\orcidID}[1]{
    \unskip\hspace{0.1em}\href{https://orcid.org/#1}{\raisebox{-0.12\height}{\includegraphics[height=0.8em]{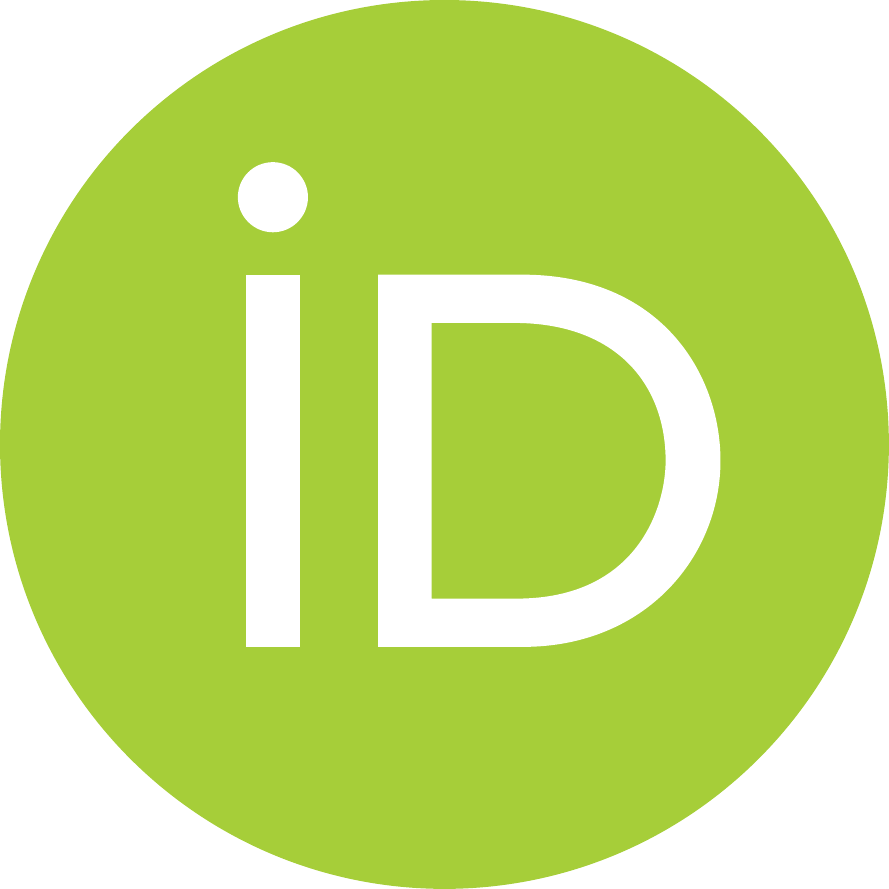}}}
}
\newcommand{\trueMark}{\ensuremath{^\ast}}
\newcommand{\phMark}{\phantom{\trueMark}}
\makeatletter
\newcommand{\gvMark}{\@ifstar{^\dagger}{^\ddagger}}
\makeatother

\begin{document}
\begin{center}
  {\Large\fontfamily{ppl}\fontseries{b}\selectfont On Steane-Enlargement of Quantum Codes\\\smallskip from Cartesian Product Point Sets}\\[1.5em]
  {René Bødker Christensen\orcidID{0000-0002-9209-3739}and Olav Geil\orcidID{0000-0002-9666-3399}}
  
  \bigskip{Department of Mathematical Sciences,
      Aalborg University, Denmark.}
  
  {\texttt{\{rene,olav\}@math.aau.dk}}

  \vglue 2em
  {\small\textbf{Abstract}}\\[0.5em]
  \begin{minipage}{0.9\linewidth}
    \rule{\linewidth}{0.5pt}
    \footnotesize In this work, we study quantum error-correcting
    codes obtained by using Steane-enlargement. We apply this technique to
    certain codes defined from Cartesian products previously considered by
    Galindo et al. in \cite{GGHR18}. We give bounds on the dimension
    increase obtained via enlargement, and additionally give an algorithm
    to compute the true increase. A number of examples of codes are
    provided, and their parameters are compared to relevant codes in the
    literature, which shows that the parameters of the enlarged codes are
    advantageous. Furthermore, comparison with the Gilbert-Varshamov bound
    for stabilizer quantum codes shows that several of the enlarged codes
    match or exceed the parameters promised by the bound.

    \bigskip\emph{Keywords:} Cartesian product \sep Quantum code \sep Steane-enlargement
    \sep  Finite fields

    \medskip\emph{2000 MSC:} 94B27 \sep 81Q99\\[-5pt]
    \rule{\linewidth}{0.5pt}
  \end{minipage}
\end{center}

\section{Introduction}\label{sec:intro}
In \cite{GGHR18}, Galindo et al. give two constructions of quantum error-correcting codes defined from Cartesian product point sets, and the resulting codes have good parameters. In their work, they consider \emph{asymmetric} quantum codes, meaning that phase-shift and bit-flip errors are not treated equally and both of the corresponding minimal distances $d_z,d_x$ are of interest. Another viewpoint commonly seen in the literature is that no distinction should be made between the two types of errors when assessing the error-correcting capabilities of a quantum error-correcting code. In this setting, only one distance $d=\min\{d_z,d_x\}$ is associated to the quantum code, and the code is called \emph{symmetric}. Clearly, the parameters in the asymmetric setting can be translated into the symmetric setting by ignoring the highest distance. Thus, someone interested in symmetric codes could use the results from \cite{GGHR18}, but discarding the highest distance essentially wastes coding space which could instead be used to increase the dimensions of the codes. In this work, we take an alternative approach and apply Steane-enlargement to that family of codes in order to produce symmetric codes directly. We thereby produce quantum error-correcting codes with good -- sometimes even optimal -- parameters.

The classical codes considered in this work are special cases of what is called \emph{monomial Cartesian codes} in a recent work \cite{LMS19}. In that paper, the authors derive a way to determine if a monomial Cartesian code is self-orthogonal, and use this to construct quantum codes via the CSS-construction. The classical codes used in their construction are, however, different from the ones used in the current paper. In particular, the improved codes considered in this work have the best possible dimension given any designed distance.

\section{Preliminaries}\label{sec:prelim}
In this section, we recall two results on the CSS-construction and Steane-enlargement that allow construction of quantum codes from classical codes. Then we give a description of a family of codes and  the corresponding improved codes, both of which were previously considered in \cite{GGHR18}.
In our analysis, we will rely on the notion of relative distances of nested pairs of classical linear codes. Thus, recall that for codes $\mathcal{C}_2\subsetneq\mathcal{C}_1$ their relative distance is defined as
\begin{equation*}
  d(\mathcal{C}_1,\mathcal{C}_2)=\min\{w_H(\vec{c})\mid\vec{c}\in\mathcal{C}_1\setminus\mathcal{C}_2\}
\end{equation*}
where $w_H$ denotes the usual Hamming weight. In general, however, the relative distance is difficult to determine, and the bound $d(\mathcal{C}_1,\mathcal{C}_2)\geq d(\mathcal{C}_1)$ is commonly used instead.

\subsection{The CSS-construction and Steane-enlargement}\label{sec:cssAndEnlarge}
One way to construct quantum error-correcting codes is by using the so-called CSS-construction \cite{CalderbankShor,Steane96} named after Calderbank, Shor, and Steane. The original construction uses a self-orthogonal classical linear code to construct a symmetric quantum error-correcting code.
\begin{theorem}\label{thm:cssSelfOrth}
  If the $[n,k,d]$ linear code $\mathcal{C}\subseteq\mathbb{F}_q^n$ contains its Euclidean dual, then an
  \begin{equation*}
    [[n,2k-n,d]]_q
  \end{equation*}
  symmetric quantum code exists.
\end{theorem}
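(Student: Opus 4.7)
The plan is to invoke the stabiliser formalism and build a stabiliser code directly from a generator matrix of $\mathcal{C}^\perp$. Let $H\in\mathbb{F}_q^{(n-k)\times n}$ be a generator matrix of $\mathcal{C}^\perp$. For each row $\vec{h}_i$ of $H$ I would introduce two stabiliser generators, one $X$-type and one $Z$-type, both indexed by $\vec{h}_i$, giving $2(n-k)$ generators in total.

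Next I would check the three usual requirements. Commutativity between an $X$-type and a $Z$-type generator reduces to the orthogonality $\vec{h}_i\cdot\vec{h}_j = 0$; since every row of $H$ lies in $\mathcal{C}^\perp$ and the hypothesis $\mathcal{C}^\perp\subseteq\mathcal{C}$ forces $\vec{h}_j\in\mathcal{C}$ while $\vec{h}_i\in\mathcal{C}^\perp$, this holds. Same-type generators commute trivially. Linear independence of the $2(n-k)$ stabiliser generators follows from linear independence of the rows of $H$, so the stabiliser has rank $2(n-k)$ and the encoded subspace has dimension $n - 2(n-k) = 2k - n$, matching the claim.

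The remaining task is to bound the distance. Standard CSS reasoning identifies the distance of the quantum code with the relative distance $d(\mathcal{C},\mathcal{C}^\perp)$: a non-trivial logical operator corresponds, up to the stabiliser, to an element of $\mathcal{C}\setminus\mathcal{C}^\perp$, and its weight as a Pauli operator is the Hamming weight of the underlying codeword. Combined with the elementary bound $d(\mathcal{C},\mathcal{C}^\perp)\geq d(\mathcal{C}) = d$ recalled just before the theorem, this yields the asserted parameters $[[n,2k-n,d]]_q$.

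The main obstacle is precisely this last identification: translating the symplectic (or trace-symplectic, for $q>2$) weight on the Pauli group $\mathcal{P}_n$ into the Hamming weight on $\mathcal{C}$ requires some genuine bookkeeping, and showing that every normaliser element not already in the stabiliser is represented by some $\vec{c}\in\mathcal{C}\setminus\mathcal{C}^\perp$ is where the work sits. Here I would appeal directly to the general CSS statement of Calderbank--Shor~\cite{CalderbankShor} and Steane~\cite{Steane96} rather than reprove the non-binary case from scratch.
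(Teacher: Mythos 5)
Your sketch is the standard CSS/stabiliser argument, and it is correct: self-orthogonality of $\mathcal{C}^\perp$ gives commuting $X$- and $Z$-type generators, the rank count gives dimension $2k-n$, and the logical operators are represented by cosets of $\mathcal{C}^\perp$ in $\mathcal{C}$, so the distance is $d(\mathcal{C},\mathcal{C}^\perp)\geq d$. The paper states this theorem as a recalled preliminary and offers no proof of its own, deferring to Calderbank--Shor and Steane exactly as you do for the non-binary bookkeeping, so there is nothing further to compare.
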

Steane \cite{Steane99} proposed a variation on this procedure which in some cases allows an increase in dimension compared to the corresponding CSS-code but without reducing the minimal distance. Below, we state the $q$-ary generalization of this procedure, which may be found in \cite{Hamada,qArySteane}.
\begin{theorem}\label{thm:qArySteane}
  Consider a linear $[n,k]$ code $\mathcal{C}\subseteq\mathbb{F}_q^n$ that contains its Euclidean dual $\mathcal{C}^\perp$. If $\mathcal{C}'$ is an $[n,k']$ code such that $\mathcal{C}\subsetneq\mathcal{C}'$ and $k'\geq k+2$, then an
  \begin{equation*}
    \Big[\Big[n,k+k'-n,\geq \min\big\{d,\big\lceil(1+\textstyle\frac{1}{q})d'\big\rceil\big\}\Big]\Big]_q
  \end{equation*}
  quantum code exists with $d=d(\mathcal{C},\mathcal{C}'^\perp)$ and $d'=d(\mathcal{C}',\mathcal{C}'^\perp)$.
\end{theorem}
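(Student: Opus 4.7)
The plan is to follow the $q$-ary Steane-enlargement construction developed in \cite{Hamada,qArySteane}. I would construct the enlarged code as a stabilizer code over $\mathbb{F}_q$ whose symplectic description captures the chain $\mathcal{C}'^\perp\subseteq\mathcal{C}^\perp\subseteq\mathcal{C}\subseteq\mathcal{C}'$: the stabilizer $S$ is built so that its symplectic dual $S^{\perp_s}$ contains the pairs $(\vec{a},\vec{b})$ with $\vec{a}\in\mathcal{C}'$ and $\vec{b}\in\mathcal{C}$, and so that $\operatorname{rank}S=2n-k-k'$. This immediately gives length $n$ and encoded dimension $n-(2n-k-k')=k+k'-n$.

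For the distance, I would analyse a minimum-weight representative $(\vec{a},\vec{b})$ of a non-trivial coset, splitting according to whether the $Z$-component can be shifted into $\mathcal{C}'^\perp$. If not, every representative satisfies $\vec{b}+\vec{t}\in\mathcal{C}\setminus\mathcal{C}'^\perp$, whence the symplectic weight obeys $w_q(\vec{a},\vec{b})\geq w_H(\vec{b}+\vec{t})\geq d(\mathcal{C},\mathcal{C}'^\perp)=d$. Otherwise the error is essentially pure $X$-type with $\vec{a}\in\mathcal{C}'\setminus\mathcal{C}^\perp\subseteq\mathcal{C}'\setminus\mathcal{C}'^\perp$, for which the naive lower bound is only $w_H(\vec{a})\geq d'$.

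The main obstacle is the $(1+1/q)$-factor improvement in the second case, which I would establish by a line-counting argument. Pick $\vec{c}\in\mathcal{C}^\perp\setminus\mathcal{C}'^\perp$ (such a vector exists because $k'\geq k+2$ guarantees $\mathcal{C}^\perp\supsetneq\mathcal{C}'^\perp$), and set $V=\Span(\vec{a},\vec{c})\subseteq\mathcal{C}'$. Since $\vec{a}\notin\mathcal{C}^\perp$ while $\vec{c}\in\mathcal{C}^\perp\setminus\mathcal{C}'^\perp$, one checks that $V\cap\mathcal{C}'^\perp=\{\vec{0}\}$, so every non-zero vector of $V$ has Hamming weight at least $d'$. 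At each coordinate $i$ the evaluation $\vec{v}\mapsto v_i$ on $V$ is either identically zero or has a one-dimensional kernel; in the latter case it is non-zero on exactly $q$ of the $q+1$ one-dimensional subspaces (``lines'') of $V$. Summing Hamming weights of one representative per line then gives $\sum_\ell w_H(\vec{v}_\ell)=q\cdot|\operatorname{supp}(V)|\geq(q+1)d'$, so $|\operatorname{supp}(V)|\geq\lceil(1+1/q)d'\rceil$. The final delicate step is to convert this bound on $|\operatorname{supp}(V)|$ into the desired bound on $w_q(\vec{a},\vec{b})$; this is where the full Steane structure of the stabilizer—strictly richer than the plain CSS stabilizer of $(\mathcal{C}',\mathcal{C}^\perp)$—is essential, since it is precisely this extra structure that prevents $\vec{b}$ from being reduced too freely and would otherwise yield only the weaker bound $d(\mathcal{C}',\mathcal{C}^\perp)$.
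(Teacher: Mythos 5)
First, note that the paper does not prove this theorem: it is quoted from \cite{Hamada,qArySteane}, so your proposal must be measured against the standard construction in those references. Your outline has the right overall shape (a stabilizer whose symplectic dual reflects the chain $\mathcal{C}'^\perp\subseteq\mathcal{C}^\perp\subseteq\mathcal{C}\subseteq\mathcal{C}'$, a case split on whether the error is ``CSS-like'', and a projective-line count over the $q+1$ one-dimensional subspaces of a plane to extract the factor $1+\frac{1}{q}$). The line-counting computation itself is correct as far as it goes.

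However, there is a genuine gap, and it sits exactly where you placed your ``final delicate step''. You apply the counting argument to $V=\Span(\vec{a},\vec{c})$ where $\vec{c}\in\mathcal{C}^\perp\setminus\mathcal{C}'^\perp$ is an \emph{auxiliary} vector unrelated to the error. A lower bound on $\lvert\operatorname{supp}(V)\rvert=\lvert\operatorname{supp}(\vec{a})\cup\operatorname{supp}(\vec{c})\rvert$ gives no lower bound on $w_H(\vec{a})$ or on the symplectic weight of $(\vec{a},\vec{b})$: the union of supports can be large while $w_H(\vec{a})$ is only $d'$. In the actual proof the plane to which the counting is applied is $\Span(\vec{u},\vec{v})$ spanned by the \emph{two components of the error itself}, because then $\lvert\operatorname{supp}(\Span(\vec{u},\vec{v}))\rvert=\lvert\operatorname{supp}(\vec{u})\cup\operatorname{supp}(\vec{v})\rvert$ \emph{is} the symplectic weight. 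Making this work requires the ingredient your construction omits: writing $\mathcal{C}'=\mathcal{C}+\operatorname{rowspace}(A)$ and taking the extra stabilizer generators of the form $(\vec{x}A\mid \vec{x}MA)$ for an invertible $(k'-k)\times(k'-k)$ matrix $M$ with \emph{no eigenvalues in $\mathbb{F}_q$}. The existence of such $M$ is precisely where $k'\geq k+2$ enters (no $1\times1$ invertible matrix over $\mathbb{F}_q$ is eigenvalue-free), and the eigenvalue condition is what forces every nonzero combination $\alpha\vec{u}+\beta\vec{v}$ of the components of a nontrivial undetectable error to lie in $\mathcal{C}'\setminus\mathcal{C}\subseteq\mathcal{C}'\setminus\mathcal{C}'^\perp$, so that all $q+1$ line representatives have weight at least $d'$. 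Without specifying $M$ and its eigenvalue-freeness, neither the claimed stabilizer rank $2n-k-k'$ nor the reduction of the distance bound to the line count can be justified, so the proof is incomplete at its central point.
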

Here we note that if $\mathcal{C}$ and $\mathcal{C}'$ are codes that satisfy the conditions of Theorem~\ref{thm:qArySteane}, then the inclusions $\mathcal{C}'^\perp\subsetneq\mathcal{C}^\perp\subseteq\mathcal{C}\subsetneq\mathcal{C}'$ hold, which implies $d(\mathcal{C}'^\perp)\geq d(\mathcal{C})$. In particular, this means that whenever $d(\mathcal{C}')<d(\mathcal{C})$, it must be the case that $d'=d(\mathcal{C}',\mathcal{C}'^\perp)=d(\mathcal{C}')$. For the specific enlargements considered in Section~\ref{sec:steane-enlarg-impr}, it turns out that this observation allows us to use the usual minimal distances rather than the relative distances while still obtaining the same parameters of the quantum codes.

\subsection{Codes from Cartesian product point sets}
Let $q=p^r$ where $p$ is a prime number, and let $r_1,r_2,\ldots,r_m$ be positive integers such that $r_i\mid r$. Then we have the inclusions $\mathbb{F}_{p^{r_i}}\subseteq\mathbb{F}_q$, and it is possible to consider the Cartesian product $S=\mathbb{F}_{p^{r_1}}\times\mathbb{F}_{p^{r_2}}\times\cdots\times\mathbb{F}_{p^{r_m}}\subseteq\mathbb{F}_q^m$. Now, define the polynomials
\begin{equation*}
  F_i(X_i)=\prod_{\alpha\in\mathbb{F}_{p^{r_i}}}(X_i-\alpha)=X_i^{p^{r_i}}-X_i,
\end{equation*}
and consider the ring $R=\mathbb{F}_q[X_1,X_2,\ldots,X_m]/I$ where
\begin{equation*}
  I=\langle F_1(X_1),F_2(X_2),\ldots,F_m(X_m)\rangle
\end{equation*}
is the vanishing ideal of the $F_i$'s. Letting $n=|S|=\prod_{i=1}^m p^{r_i}$ and $S=\{\vec{\alpha}_1,\vec{\alpha}_2,\ldots,\vec{\alpha}_n\}$, we obtain a vector space homomorphism $\ev\colon R\rightarrow\mathbb{F}_q^n$ given by
\begin{equation*}
  \ev(F+I)=(F(\vec{\alpha}_1),F(\vec{\alpha}_2),\ldots,F(\vec{\alpha}_n))
\end{equation*}
as described in \cite{GGHR18}.
Adopting a vectorized version of their notation, we define for $\vec{r}=(r_1,r_2,\ldots,r_m)$ the set
\begin{equation*}
  \Delta(\vec{r})=\{X^{\vec{a}}\mid \vec{a}\in\mathbb{N}^m,\; 0\leq a_j<p^{r_j},\; j=1,2,\ldots,m\},
\end{equation*}
where we use the multi-index notation $X^{\vec{a}}=X_1^{a_1}X_2^{a_2}\cdots X_m^{a_m}$.
For a subset $L\subseteq\Delta(\vec{r})$, define the code
\begin{equation}\label{eq:codeDef}
  C(L)=\Span_{\mathbb{F}_q}\{\ev(X^{\vec{a}}+I)\mid X^{\vec{a}}\in L\},
\end{equation}
which clearly has length $n$.
To describe the distance of $C(L)$, we use the map $\sigma\colon\Delta(\vec{r})\rightarrow\mathbb{N}$ given by
\begin{equation*}
  \sigma(X^{\vec{a}})=\prod_{j=1}^m(p^{r_j}-a_j).
\end{equation*}
\begin{proposition}\label{prop:dimAndDist}
  Let $C(L)$ be defined as in \eqref{eq:codeDef}. Then $\dim C(L)=|L|$, and
  \begin{equation}\label{eq:minDistBound}
    d(C(L))\geq \min\{\sigma(X^{\vec{a}})\mid X^{\vec{a}}\in L\}
  \end{equation}
  with equality if $X^{\vec{a}}\in L$ implies $X^{\vec{b}}\in L$ for all choices of $b_1\leq a_1, b_2\leq a_2,\ldots,b_m\leq a_m$.
\end{proposition}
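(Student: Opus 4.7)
The plan is to treat the dimension claim, the distance lower bound, and the equality case in three separate steps; the main work is in the distance bound.

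\emph{Dimension.} I would first observe that $\Delta(\vec{r})$ is an $\mathbb{F}_q$-basis of $R$, since reduction modulo $F_j(X_j) = X_j^{p^{r_j}} - X_j$ brings every polynomial to a linear combination of monomials with $0 \leq a_j < p^{r_j}$. Hence $\dim_{\mathbb{F}_q} R = |\Delta(\vec{r})| = n$. Because $S$ is the Cartesian product of the zero sets of the $F_j$, the vanishing ideal of $S$ is exactly $I$, so $\ev\colon R \to \mathbb{F}_q^n$ has trivial kernel and is therefore an isomorphism. The images of the $|L|$ distinct monomials in $L$ are consequently $\mathbb{F}_q$-linearly independent, giving $\dim C(L) = |L|$.

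\emph{Distance lower bound.} I would fix the lexicographic monomial order with $X_m$ largest. For a non-zero codeword $\ev(F + I) \in C(L)$, represent it by $F = \sum_{X^{\vec{b}} \in L} c_{\vec{b}} X^{\vec{b}}$ and let $X^{\vec{a}} \in L$ be its leading monomial. I claim that $F$ has at least $\sigma(X^{\vec{a}})$ non-zeros on $S$, and would prove this by induction on $m$. For $m = 1$, a univariate polynomial of degree $a_1 < p^{r_1}$ has at most $a_1$ roots, leaving at least $p^{r_1} - a_1 = \sigma(X^{\vec{a}})$ non-zeros in $\mathbb{F}_{p^{r_1}}$. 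For the step, write $F = \sum_{k=0}^{a_m} F_k(X_1, \ldots, X_{m-1}) X_m^k$; the choice of order forces $F_{a_m}$ to have leading monomial $X_1^{a_1} \cdots X_{m-1}^{a_{m-1}}$, so by the induction hypothesis $F_{a_m}$ is non-zero on at least $\prod_{j<m}(p^{r_j} - a_j)$ points of $\mathbb{F}_{p^{r_1}} \times \cdots \times \mathbb{F}_{p^{r_{m-1}}}$. At each such point, the univariate specialisation $F(x_1, \ldots, x_{m-1}, X_m)$ has degree exactly $a_m$ in $X_m$, contributing at least $p^{r_m} - a_m$ further non-zeros, and multiplying yields the claimed bound.

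\emph{Equality.} For the equality statement, assume $L$ is downward closed, pick $X^{\vec{a}^{\ast}} \in L$ minimising $\sigma$, and for each $j$ choose distinct elements $\beta_{j,1}, \ldots, \beta_{j,a^{\ast}_j} \in \mathbb{F}_{p^{r_j}}$. The polynomial $F = \prod_{j=1}^{m} \prod_{i=1}^{a^{\ast}_j}(X_j - \beta_{j,i})$ expands as a sum of monomials $X^{\vec{b}}$ with $b_j \leq a^{\ast}_j$ for all $j$, each of which lies in $L$ by downward closedness; hence $\ev(F + I) \in C(L)$. Its non-zeros on $S$ are precisely the points with $x_j \notin \{\beta_{j,1}, \ldots, \beta_{j,a^{\ast}_j}\}$ for every $j$, of which there are exactly $\sigma(X^{\vec{a}^{\ast}})$, so the lower bound is attained.

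The main obstacle is the inductive step of the distance bound: one must choose a monomial order compatible with the product structure of $S$, so that stripping off the top power of $X_m$ leaves a polynomial whose leading monomial is exactly the projection of $X^{\vec{a}}$ onto the first $m - 1$ variables. Once the order is chosen correctly, the argument reduces to counting roots of univariate polynomials on each slice.
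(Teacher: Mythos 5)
Your proof is correct and takes essentially the same route as the paper's: the equality case is the identical product-of-linear-factors construction, and your lexicographic-order induction for the lower bound is precisely the footprint-bound argument that the paper delegates to the reference [GH01, Prop.~1], just as the dimension claim (which you prove via injectivity of $\ev$) is delegated to [GGHR18, Thm.~16]. The only difference is that your write-up is self-contained where the paper cites.
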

\begin{proof}
  The claim about the dimension is for instance shown in the proof of \cite[Thm.\,16]{GGHR18}. The inequality \eqref{eq:minDistBound} can be proved by using the footprint bound as done in \cite[Prop.\,1]{GH01}.

  To see the equality, write $\mathbb{F}_{p^{r_i}}=\{v_1^{(i)},v_2^{(i)},\ldots,v_{p^{r_i}}^{(i)}\}$, let $X^{\vec{a}}\in L$, and observe that the expansion of the polynomial
  \begin{equation*}
    f=\prod_{j=1}^m\prod_{i=1}^{a_j}(X_j-v_i^{(j)})
  \end{equation*}
  contains only monomials $X^{\vec{b}}$ with $\vec{b}$ as described in the proposition, meaning that $\ev(f+I)\in C(L)$. Moreover, it possesses exactly $\prod_{j=1}^m(p^{r_i}-a_j)=\sigma(X^{\vec{a}})$ non-zeros.
\end{proof}
This proposition not only allows us to determine the exact minimal distance of the codes considered in the following section, but more importantly it also enables us to determine certain relative distances when combined with the observations below Theorem~\ref{thm:qArySteane}.

\subsection{Improved codes}\label{subsec:improved-codes}
The information on the minimal distance provided by $\sigma$ leads to improved code constructions in a straightforward manner. By defining
\begin{equation}\label{eq:improvedMonomials}
  L(\delta)=\{X^{\vec{a}}\in\Delta(\vec{r})\mid\sigma(X^{\vec{a}})\geq\delta\}
\end{equation}
the code $C(L(\delta))$ has designed distance $\delta$ by Proposition~\ref{prop:dimAndDist}. In addition, this is the true minimal distance since $\sigma(X^{\vec{b}})\geq\sigma(X^{\vec{a}})$ if $b_1\leq a_1,b_2\leq a_2,\ldots,b_m\leq a_m$.
The dual of $C(L(\delta))$ can be described by studying the map $\mu\colon\Delta(\vec{r})\rightarrow\mathbb{N}$ defined as
\begin{equation*}
  \mu(X^{\vec{a}})=\prod_{j=1}^m(a_j+1).
\end{equation*}
In particular, by letting $L^\perp(\delta)=\{X^{\vec{a}}\in\Delta(\vec{r})\mid\mu(X^{\vec{a}})<\delta\}$ we obtain the following result.
\begin{proposition}
  Let $L(\delta)$ be defined as in \eqref{eq:improvedMonomials}. Then $C(L(\delta))^\perp=C(L^\perp(\delta))$.
\end{proposition}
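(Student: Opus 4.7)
The plan is to verify the equality by establishing the containment $C(L^\perp(\delta))\subseteq C(L(\delta))^\perp$ and matching the dimensions of the two sides. Since Proposition~\ref{prop:dimAndDist} gives $\dim C(L^\perp(\delta))=|L^\perp(\delta)|$ and $\dim C(L(\delta))^\perp=n-|L(\delta)|$, the dimension half reduces to the combinatorial identity $|L(\delta)|+|L^\perp(\delta)|=|\Delta(\vec{r})|=n$.

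For that identity I would introduce the involution $\tau$ on $\Delta(\vec{r})$ that sends $X^{\vec{a}}$ to $X^{\vec{a}'}$ with $a_j'=p^{r_j}-1-a_j$. Because the $j$-th factor of $\sigma(X^{\vec{a}})$ is $p^{r_j}-a_j=a_j'+1$, one reads off $\sigma(X^{\vec{a}})=\mu(\tau(X^{\vec{a}}))$ immediately. Hence $\tau$ restricts to a bijection from $L(\delta)$ onto $\Delta(\vec{r})\setminus L^\perp(\delta)$, and the count follows.

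For the containment I would fix $X^{\vec{a}}\in L(\delta)$ and $X^{\vec{b}}\in L^\perp(\delta)$ and show that $\langle\ev(X^{\vec{a}}),\ev(X^{\vec{b}})\rangle=0$. Because $\ev$ is a ring homomorphism and $S$ is a Cartesian product, this inner product factors as $\prod_{j=1}^{m}\sum_{\alpha\in\mathbb{F}_{p^{r_j}}}\alpha^{a_j+b_j}$. The standard character-sum identity (a power sum over $\mathbb{F}_{p^r}$ vanishes except when the exponent $k$ satisfies $k\geq 1$ and $(p^r-1)\mid k$) then forces each non-vanishing $j$-th factor into one of two regimes: either $a_j+b_j=p^{r_j}-1$, giving $b_j+1=p^{r_j}-a_j$, or $a_j=b_j=p^{r_j}-1$, giving $b_j+1=p^{r_j}>1=p^{r_j}-a_j$. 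In every coordinate the factor of $\mu(X^{\vec{b}})$ dominates the corresponding factor of $\sigma(X^{\vec{a}})$, so a non-zero inner product would yield $\mu(X^{\vec{b}})\geq\sigma(X^{\vec{a}})$, contradicting $\mu(X^{\vec{b}})<\delta\leq\sigma(X^{\vec{a}})$.

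The step I expect to be the main obstacle is spotting the ``diagonal'' possibility $a_j=b_j=p^{r_j}-1$ in the character sum: the Gram matrix of the monomial evaluations is \emph{not} simply a permutation matrix corresponding to the naive complementary pairing $b_j=p^{r_j}-1-a_j$, so one cannot shortcut the argument by invoking a clean involution between the two bases. Fortunately this extra coupling only inflates $b_j+1$ relative to $p^{r_j}-a_j$, so the inequality needed for the final contradiction still goes through in every coordinate.
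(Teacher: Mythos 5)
Your proof is correct and follows essentially the same route as the paper: the dimension count via the complementation $a_j\mapsto p^{r_j}-1-a_j$ relating $\sigma$ and $\mu$, followed by the single containment $C(L^\perp(\delta))\subseteq C(L(\delta))^\perp$ using the criterion that the inner product of two monomial evaluations is non-zero exactly when $a_j+b_j>0$ and $(p^{r_j}-1)\mid(a_j+b_j)$ in every coordinate, which forces $p^{r_j}-a_j\leq b_j+1$ and hence $\sigma(X^{\vec{a}})\leq\mu(X^{\vec{b}})$. The only difference is that you derive this orthogonality criterion directly from the factorization of the inner product over the Cartesian product and the power-sum identity, whereas the paper cites it from Galindo--Hernando--Ruano; your version is self-contained but otherwise identical in substance.
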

\begin{proof}
  First, note that $\sigma(X^{\vec{a}})=\mu(X^{\vec{b}})$ for $b_i=p^{r_i}-a_j-1$. This implies that the number of monomials with a given $\sigma$-value $\delta$ is exactly the number of monomials with $\mu$-value $\delta$. As a consequence,
  \begin{align*}
    \dim C(L^\perp(\delta)) &=|\{X^{\vec{a}}\in\Delta(\vec{r})\mid \sigma(X^{\vec{a}})<\delta\}|\\
                            &=n-\dim C(L(\delta))\\
                            &=\dim C(L(\delta))^\perp.
  \end{align*}
  Hence, it suffices to show that $C(L^\perp(\delta))\subseteq C(L(\delta))^\perp$, and we do so by proving that the evaluation of any $X^{\vec{b}}$ with $\mu(X^{\vec{b}})<\delta$ must be in $C(L(\delta))^\perp$.

  Using contraposition, assume that $X^{\vec{b}}\notin C(L(\delta))^\perp$. Then some $\ev(X^{\vec{a}})\in C(L(\delta))$ satisfies $\ev(X^{\vec{a}})\cdot\ev(X^{\vec{b}})\neq 0$. As shown in \cite[Prop.\,1]{GHR15}, this happens if and only if\footnote{In their notation, the situation in consideration has $J=\emptyset$ and $p\mid N_j$ for each $j$} $a_i+b_i>0$ and $a_i+b_i\equiv 0\pmod{p^{r_i}-1}$ holds true for each index $i\in\{1,2,\ldots,m\}$. In other words, we have $a_i+b_i=p^{r_i}-1$ or $a_i+b_i=2(p^{r_i}-1)$. In each case, this implies $p^{r_i}-a_i\leq b_i+1$. In combination with the fact that $\sigma(X^{\vec{a}})\geq\delta$ since $\ev(X^{\vec{a}})\in C(L(\delta))$, we obtain the inequalities
  \begin{equation*}
    \delta\leq\sigma(X^{\vec{a}})=\prod_{i=1}^m(p^{r_i}-a_i)\leq\prod_{i=1}^m(b_i+1)=\mu(X^{\vec{b}}).
  \end{equation*}
  In conclusion, if $\mu(X^{\vec{b}})<\delta$, we have $\ev(X^{\vec{b}})\in C(L(\delta))^\perp$ which proves the proposition by the observations in the beginning of the proof.
\end{proof}

\begin{figure}[b]
  \centering
  \begin{tikzpicture}[scale=0.7]
    \pgfmathsetmacro{\offset}{0.35}
    \tikzset{shade/.style={Gray!40,opacity=0.5}}

    \fill[draw=Gray!50,shade,rounded corners=2pt](-\offset,2-\offset) -- ++(0,2*\offset) -- ++(8+2*\offset,0) -- ++(0,-2-2*\offset) -- ++(-2*\offset,0) -- ++(0,2) -- cycle;
    \draw (0,0) node{$27$} (0,1) node{$18$} (0,2) node{$9$} (1,0) node{$24$} (1,1) node{$16$} (1,2) node{$8$} (2,0) node{$21$} (2,1) node{$14$} (2,2) node{$7$} (3,0) node{$18$} (3,1) node{$12$} (3,2) node{$6$} (4,0) node{$15$} (4,1) node{$10$} (4,2) node{$5$} (5,0) node{$12$} (5,1) node{$8$} (5,2) node{$4$} (6,0) node{$9$} (6,1) node{$6$} (6,2) node{$3$} (7,0) node{$6$} (7,1) node{$4$} (7,2) node{$2$} (8,0) node{$3$} (8,1) node{$2$} (8,2) node{$1$};
  \end{tikzpicture}
  \caption{The values of $\sigma(\Delta(\vec{r}))$ for $p=3$ and $\vec{r}=(2,1)$. The shaded region shows the edges with values $1,2,3=p^{r_1}$ and $1,2,\ldots,9=p^{r_2}$, respectively.}
  \label{fig:example2D}
\end{figure}
\begin{figure}[b]
  \centering
  \begin{tikzpicture}
    \pgfsetzvec{\pgfpoint{-0.5cm}{-0.25cm}}
    \pgfmathsetmacro{\lenA}{5}
    \pgfmathsetmacro{\lenB}{1.5}
    \pgfmathsetmacro{\lenC}{1.8}
    \pgfmathsetmacro{\offset}{0.2}
    \pgfmathsetmacro{\adjA}{0.4} 
    \pgfmathsetmacro{\adjB}{0.1} 
    \tikzset{shade/.style={Gray!40,opacity=0.5}}

    \draw[dashed,gray](0,0,-\lenC) edge (0,0,0) edge (0,\lenB,-\lenC) edge (\lenA,0,-\lenC);

    \fill[shade](0,\lenB,-\lenC) --
        ++(0,0,\offset) --
        ++(0,-\offset,0) --
        ++(\lenA-\offset-\adjA,0,0) --
        ++(\adjA,\offset,0) --
        ++(0,0,-\offset) --
        cycle;
    \fill[shade](\lenA-\offset,0,-\lenC+\offset) --
        ++(\offset,0,0) --
        ++(0,0,-\offset) --
        ++(0,\lenB-\offset,0) --
        ++(0,0,\offset) --
        ++(-\offset,-\adjB,0) --
        cycle;
    \fill[shade](\lenA-\offset,\lenB-\offset,0) --
        ++(\offset,0,0) --
        ++(0,0,-\lenC) --
        ++(0,\offset,0) --
        ++(-\offset,0,0) --
        ++(0,0,\lenC) --
        cycle;

    \draw[Gray!50](\lenA,\lenB-\offset,0) --
    ++(0,0,-\lenC+\offset) --
    ++(0,-\lenB+\offset) --
    ++(-\offset,0,0) --
    ++(0,\lenB-\offset-\adjB,0) ++(0,\adjB,0) ++(-\adjA,0,0) --
    ++(-\lenA+\offset+\adjA,0,0) --
    ++(0,\offset,0) --
    ++(\lenA-\offset,0,0) --
    ++(0,0,\lenC-\offset) --
    ++(0,-\offset,0) --
    ++(\offset,0,0);

    \draw[thick](0,0,0) -- node[below]{$p^{r_1}$}
        ++(\lenA,0,0) -- node[below right=-2pt]{$p^{r_3}$}
        ++(0,0,-\lenC) --
        ++(0,\lenB,0) --
        ++(-\lenA,0,0) -- 
        ++(0,0,\lenC) -- node[left]{$p^{r_2}$}
        cycle;
    \draw[thick](\lenA,\lenB,0) edge (\lenA,0,0) edge (0,\lenB,0) edge (\lenA,\lenB,-\lenC);
  \end{tikzpicture}
  \caption{A sketch of $\Delta(\vec{r})$ in the case $m=3$. As in the 2-dimensional case in Figure~\ref{fig:example2D}, the shaded region shows the edges where the $\sigma$-values are $1,2,\ldots,p^{r_i}$ for each $i$.}
  \label{fig:example3D}
\end{figure}
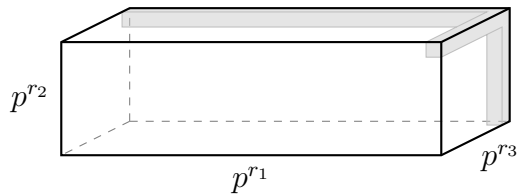

\section{Steane-enlargement of improved codes}\label{sec:steane-enlarg-impr}
We are now ready to apply Steane-enlargement to the codes defined in Section~\ref{subsec:improved-codes}. Our results rely on a simple, but crucial, observation: for each index $i=1,2,\ldots,m$, $\sigma(\Delta(\vec{r}))$ contains an `edge' with values $1,2,\ldots,p^{r_i}$. This is illustrated in Figures~\ref{fig:example2D} and \ref{fig:example3D}. This means that we can easily give a lower bound on the dimension increase when enlarging the code $C(L(\delta))$. To ease the notation in the following, we will order the exponents $r_i$ such that $r_1\geq r_2\geq \cdots\geq r_m$.
\begin{proposition}\label{prop:enlargement}
  Let $q=p^r$, and let $\vec{r}\in\mathbb{Z}_+^m$ be a vector such that $r_i\mid r$ for each $i$ and $r_1\geq r_2\geq\cdots\geq r_m$.
  Additionally, let $2< \delta\leq p^{r_2}+1$, and let $K$ be the largest index such that $\delta-1\leq p^{r_K}$. Then if $C(L(\delta))$ is a self-orthogonal $[n,k]$ code, there exists a quantum error-correcting code with parameters
  \begin{equation}\label{eq:enlargeParameters}
    [[n,\geq 2k-n+K,\geq\delta]]_q.
  \end{equation}
\end{proposition}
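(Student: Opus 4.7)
The plan is to apply Theorem~\ref{thm:qArySteane} with $\mathcal{C} = C(L(\delta))$ and an overcode $\mathcal{C}' = C(L')$ obtained by adjoining to $L(\delta)$ exactly $K$ monomials of $\sigma$-value $\delta - 1$. The key structural fact to exploit, as highlighted by the paragraph preceding the proposition and by Figures~\ref{fig:example2D} and~\ref{fig:example3D}, is that for each index $i$ the ``edge'' consisting of monomials $X^{\vec{e}}$ with $e_j = p^{r_j} - 1$ for $j \neq i$ and $e_i$ free carries all $\sigma$-values $1, 2, \ldots, p^{r_i}$ exactly once.

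First I would pick, for each $i \in \{1,2,\ldots,K\}$, the unique edge monomial $X^{\vec{e}^{(i)}}$ in direction $i$ with $\sigma(X^{\vec{e}^{(i)}}) = \delta - 1$, i.e.\ $e^{(i)}_i = p^{r_i} - \delta + 1$ and $e^{(i)}_j = p^{r_j} - 1$ for $j \neq i$; this exists precisely by the defining property of $K$. Setting $L' = L(\delta) \cup \{X^{\vec{e}^{(1)}}, \ldots, X^{\vec{e}^{(K)}}\}$, I would verify that these $K$ new monomials are pairwise distinct (their exponent vectors differ in two coordinates, using $\delta > 2$) and lie outside $L(\delta)$ (their $\sigma$-values are too small). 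Since the evaluations of the monomials in $\Delta(\vec{r})$ form a basis of $\mathbb{F}_q^n$, Proposition~\ref{prop:dimAndDist} gives $\dim C(L') = |L'| = k + K$, so $\mathcal{C} \subsetneq \mathcal{C}'$. The hypothesis $\delta \leq p^{r_2} + 1$ together with the ordering $r_1 \geq r_2 \geq \cdots \geq r_m$ forces $K \geq 2$, delivering $k' \geq k + 2$ as required.

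Next I would control the two distances entering Theorem~\ref{thm:qArySteane}. The unconditional lower bound in Proposition~\ref{prop:dimAndDist} yields $d(\mathcal{C}') \geq \delta - 1$, and the chain $\mathcal{C}'^\perp \subsetneq \mathcal{C}^\perp \subseteq \mathcal{C} \subsetneq \mathcal{C}'$ then implies $d = d(\mathcal{C}, \mathcal{C}'^\perp) \geq d(\mathcal{C}) = \delta$ and $d' = d(\mathcal{C}', \mathcal{C}'^\perp) \geq d(\mathcal{C}') \geq \delta - 1$. A one-line check shows $\lceil (1 + 1/q)(\delta - 1) \rceil \geq \delta$ whenever $\delta \geq 2$, so the minimum appearing in the conclusion of Theorem~\ref{thm:qArySteane} is bounded below by $\delta$, producing the claimed $[[n, 2k - n + K, \geq \delta]]_q$ code.

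The main obstacle is psychological rather than technical: once $L'$ stops being downward-closed one might fear that $d(\mathcal{C}')$ could collapse well below $\delta - 1$. The point to emphasise is that Proposition~\ref{prop:dimAndDist} supplies the needed lower bound \emph{unconditionally}, while the equality half of that proposition is used only to pin down $d(\mathcal{C})$ itself. Everything else reduces to bookkeeping on the edge coordinates and the short ceiling computation above.
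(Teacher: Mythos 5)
Your proof is correct, and it reaches the same bound by the same engine (Theorem~\ref{thm:qArySteane} plus the ``edge'' observation), but the choice of enlarging code is genuinely different from the paper's. The paper simply takes $\mathcal{C}'=C(L(\delta-1))$, i.e.\ it adjoins \emph{all} monomials of $\sigma$-value $\delta-1$; since $L(\delta-1)$ is downward closed, the equality part of Proposition~\ref{prop:dimAndDist} gives $d(\mathcal{C}')=\delta-1$ exactly, and the observation below Theorem~\ref{thm:qArySteane} then pins down the relative distance $d(\mathcal{C}',\mathcal{C}'^{\perp})=\delta-1$ on the nose. You instead adjoin only the $K$ explicitly constructed edge monomials, which costs you a small amount of bookkeeping (existence inside $\Delta(\vec{r})$, pairwise distinctness via $\delta>2$, membership outside $L(\delta)$ --- all of which you handle correctly) but buys you a cleaner distance argument: you never need exactness or the relative-distance observation, only the unconditional footprint bound $d(\mathcal{C}')\geq\delta-1$ and the trivial inequality $d(\mathcal{C}',\mathcal{C}'^{\perp})\geq d(\mathcal{C}')$, together with $\lceil(1+\frac{1}{q})(\delta-1)\rceil\geq\delta$, which indeed holds for all $\delta\geq 2$ without invoking $\delta-1\leq q$. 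The trade-off is that the paper's $\mathcal{C}'$ has dimension $k+\tau^{(\vec{r})}(\delta-1)$, which can exceed $k+K$ and is exactly what Section~\ref{sec:exactIncrease} later exploits to get larger true dimensions; your minimal enlargement yields dimension exactly $2k-n+K$, which proves the stated $\geq$ but deliberately forgoes that extra gain. Both arguments are sound.
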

\begin{proof}
  Write $\mathcal{C}=C(L(\delta))$, and let $\mathcal{C}'=C(L(\delta-1))$. Since $1<\delta-1\leq p^{r_K}$, the observation at the start of this section implies that there are at least $K\geq 2$ monomials $X^{\vec{a}}\in\Delta(\vec{r})$ such that $\sigma(X^{\vec{a}})=\delta-1$. Thus, $\mathcal{C}'$ has dimension $k'\geq k+K$. As described in Section~\ref{subsec:improved-codes}, $\mathcal{C}$ and $\mathcal{C}'$ have minimal distances $\delta$ and $\delta-1$, respectively. Thus the observation below Theorem~\ref{thm:qArySteane} ensures that $d(\mathcal{C}',\mathcal{C}'^\perp)=d(\mathcal{C}')=\delta-1$, and we obtain
  \begin{equation*}
    \big\lceil(1+\textstyle\frac{1}{q})d(\mathcal{C}')\big\rceil
    = \big\lceil(1+\textstyle\frac{1}{q})(\delta-1)\big\rceil
    =\delta,
  \end{equation*}
  where the last equality stems from the assumption that $\delta-1\leq p^{r_2}\leq q$. The claim now follows by applying Theorem~\ref{thm:qArySteane} to $\mathcal{C}$ and $\mathcal{C}'$, and by using the bound $d(\mathcal{C},\mathcal{C}'^\perp)\geq d(\mathcal{C})=\delta$.
\end{proof}
A few additional remarks can be made about the Steane-enlargement described in Proposition~\ref{prop:dimAndDist}. First of all, the observation that leads to Proposition~\ref{prop:dimAndDist} does not help in the case $\delta> q+1$ since we require $\delta\leq p^{r_2}+1\leq q+1$. This does not mean that Steane-enlargement is impossible for $\delta>q+1$, but merely that we cannot \emph{guarantee} that enlargement is possible.

Secondly, the increase in dimension when applying Steane-enlargement to the code $C(L(\delta))$ may be greater than the $K$ specified in Proposition~\ref{prop:enlargement} since this $K$ is determined by considering monomials along the `edges' as in Figure~\ref{fig:example3D}. There may be several other monomials that have $\sigma$-value $\delta-1$, yielding a quantum error-correcting code with even better parameters. In Section~\ref{sec:exactIncrease}, we characterize the situations where this may happen, and give an improved bound in such cases. First, however, we illustrate the result through an example.

\begin{example}\label{ex:q9}
  Let $q=3^2=9$ and $\vec{r}=(2,2,1)$. The classical code $C(L(4))$ has parameters $[243,236,4]_9$, whence the CSS-construction, Theorem~\ref{thm:cssSelfOrth}, gives a $[[243,229,4]]_9$ quantum code. Since $\delta-1=3=p^{r_3}$, Proposition~\ref{prop:enlargement} ensures that Steane-enlargement will instead provide a quantum code with parameters $[[243,\geq 232,\geq 4]]_9$. In this case, the true dimension is in fact $232$.

  Using the same $q$ and $\vec{r}$, the code $C(L(7))$ is a $[243,221,7]_9$ classical code, yielding a $[[243,199,7]]_9$ quantum code via the CSS-construction. This time, Proposition~\ref{prop:enlargement} only guarantees a dimension increase of $2$ when applying Steane-enlargement, but the true parameters of the enlarged code are $[[243, 207, \geq 7]]_9$, meaning that the dimension has been increased by $8$.
\end{example}

\subsection{Determining the exact dimension increase}\label{sec:exactIncrease}
As previously mentioned, the dimension of an enlarged code may be greater than predicted in \eqref{eq:enlargeParameters}.
In this section, we will generalize the map $\tau^{(q)}$ from \cite{CG19} to provide an algorithm for computing the exact dimension increase when applying Steane-enlargement to the code $C(L(\delta))$. This generalization will also aid in characterizing those values of $\delta$ where Proposition~\ref{prop:enlargement} underestimates the dimension.
\begin{definition}\label{defi:tau}
  For $s\in\mathbb{Z}_+$ and $\vec{r}\in\mathbb{Z}_+^m$, we let $\tau^{(\vec{r})}(s)$ denote the number of tuples $(d_1,d_2,\ldots,d_m)$ such that $1\leq d_i\leq p^{r_i}$ for every $i$, and such that $s=\prod_{i=1}^md_i$.
\end{definition}
\begin{proposition}\label{prop:tauPrimeNonPrime}
  Let $s$ and $\vec{r}$ be as in Definition~\ref{defi:tau}, and assume that $r_1\geq r_2\geq\cdots\geq r_m$. Let $K$ be the largest index such that $s\leq p^{r_K}$. Then if $s$ is\textellipsis
  \begin{itemize}
    \item \textellipsis prime, we have $\tau^{(\vec{r})}(s)=K$
    \item \textellipsis square, we have $\tau^{(\vec{r})}(s)\geq K+\binom{K}{2}$
    \item \textellipsis non-prime and non-square, we have $\tau^{(\vec{r})}(s)\geq K^2$
  \end{itemize}
\end{proposition}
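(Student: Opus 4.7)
The plan is to exhibit explicit families of tuples counted by $\tau^{(\vec{r})}(s)$ in each of the three cases, exploiting the ordering $r_1\geq r_2\geq\cdots\geq r_m$. The key observation is that every positive divisor $d$ of $s$ satisfies $d\leq s\leq p^{r_K}\leq p^{r_i}$ for every $i\in\{1,\ldots,K\}$, so any divisor of $s$ may legally be placed at any of the first $K$ positions.

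First I would record the $K$ ``concentrated'' tuples obtained by setting $d_i=s$ for a single $i\in\{1,\ldots,K\}$ and $d_j=1$ for $j\neq i$. For the prime case these exhaust all valid tuples: if $s$ is prime, then in any factorization $\prod_j d_j=s$ with positive integer entries, exactly one coordinate equals $s$ and the remaining coordinates equal $1$; the condition $d_i=s\leq p^{r_i}$ then forces $i\leq K$. This yields the equality $\tau^{(\vec{r})}(s)=K$.

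For the non-prime, non-square case I would pick a divisor $a$ of $s$ with $1<a<\sqrt{s}$; such an $a$ exists because $s$ has a proper divisor in the open interval $(1,s)$ (non-prime) and no proper divisor equals $\sqrt{s}$ (non-square). Setting $b=s/a$ yields $1<a<b<s$, and both $a,b\leq p^{r_K}$. For each ordered pair $(i,j)$ of distinct indices in $\{1,\ldots,K\}$, the tuple with $a$ at position $i$, $b$ at position $j$, and $1$ at the remaining positions is valid and different from the concentrated ones; this produces $K(K-1)$ additional tuples, for a total of $K+K(K-1)=K^2$. In the square case I would write $s=t^2$; provided $s\geq 4$ one has $1<t<s$ and $t\leq p^{r_K}$, so for each of the $\binom{K}{2}$ unordered pairs $\{i,j\}\subseteq\{1,\ldots,K\}$ the tuple with $d_i=d_j=t$ and all other entries $1$ is valid and new, adding $\binom{K}{2}$ tuples for a total of $K+\binom{K}{2}$.

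The main thing to check is that the listed tuples are pairwise distinct, but this is essentially immediate from a glance at their supports, i.e.\ the set of positions with entry greater than $1$: the concentrated tuples have singleton supports and always carry the value $s$ there, while the two-index tuples have two-element supports and are determined by the pair of indices (together with the values $a,b$ or $t,t$). I do not expect a serious obstacle; the only subtlety is ensuring the chosen divisors respect the bounds $d_i\leq p^{r_i}$, and this is handled uniformly by the chain $d\leq s\leq p^{r_K}\leq p^{r_i}$ valid for $i\leq K$.
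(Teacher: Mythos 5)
Your proposal is correct and follows essentially the same route as the paper: count the $K$ tuples with a single entry $s$ (which exhaust all possibilities when $s$ is prime), then add tuples from a two-factor splitting $s=ab$ placed at pairs of positions among the first $K$, distinguishing whether the two factors can be forced distinct (non-square, giving $K(K-1)$ extra) or may coincide (square, giving $\binom{K}{2}$ extra). Your explicit choice of $a<\sqrt{s}<b$ and the remark about supports just make the paper's distinctness claim slightly more explicit.
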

\begin{proof}
  Assume first that $s$ is prime. Then any tuple $(d_1,d_2,\ldots,d_m)\in\mathbb{Z}_+$ with $s=\prod_{i=1}^md_i$ must have $d_i=s$ for some $i$ and $d_j=1$ for $j\neq i$. Hence, in this case $\tau^{(\vec{r})}(s)$ is the number of indices $i$ such that $d_i\leq p^{r_i}$, which is exactly $K$.

  If $s$ is non-prime, there are still $K$ tuples with a single entry greater than $1$ as in the prime case. But we may also split $s$ in two factors $s=f_1f_2$ such that $f_1f_2<s\leq p^{r_K}$. Now, for any distinct indices $i_1,i_2\in\{1,2,\ldots,K\}$, the tuple $(d_1,d_2,\ldots,d_m)$ with $d_{i_1}=f_1$, $d_{i_2}=f_2$, and $d_i=1$ for $i\notin\{i_1,i_2\}$ is one of the tuples counted by $\tau^{(\vec{r})}(s)$. The number of ways to choose the indices $i_1,i_2$ is $K(K-1)$. If $s$ is not a square number, $f_1$ and $f_2$ are distinct, and each of the $K(K-1)$ choices of $i_1,i_2$ leads to a distinct tuple. Is $s$ is a square, we may have $f_1=f_2$, and the number of distinct tuples is instead $K(K-1)/2=\binom{K}{2}$. In both cases, we obtain the claimed inequality by adding $K$. 
\end{proof}

\begin{proposition}\label{prop:numDivisors}
  Let $s\in\mathbb{Z}_+$. Then the number of monomials $X^{\vec{a}}\in\Delta(\vec{r})$ that have $\sigma(X^{\vec{a}})=s$ is $\tau^{(\vec{r})}(s)$.
\end{proposition}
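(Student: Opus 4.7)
The plan is to establish a straightforward bijection between the set of monomials counted on each side. Specifically, for $X^{\vec{a}}\in\Delta(\vec{r})$ with exponent vector $\vec{a}=(a_1,\ldots,a_m)$ satisfying $0\leq a_j<p^{r_j}$, define $\vec{d}=(d_1,\ldots,d_m)$ by
\begin{equation*}
  d_j = p^{r_j} - a_j, \qquad j=1,2,\ldots,m.
\end{equation*}
The constraint $0\leq a_j\leq p^{r_j}-1$ translates exactly into $1\leq d_j\leq p^{r_j}$, and by the definition of $\sigma$ we have $\sigma(X^{\vec{a}})=\prod_{j=1}^m(p^{r_j}-a_j)=\prod_{j=1}^m d_j$. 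Hence $\sigma(X^{\vec{a}})=s$ if and only if $\prod_{j=1}^m d_j=s$ with $1\leq d_j\leq p^{r_j}$.

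The map $\vec{a}\mapsto\vec{d}$ is clearly a bijection between exponent vectors in $\Delta(\vec{r})$ and tuples in $\prod_{j=1}^m\{1,2,\ldots,p^{r_j}\}$, and by the calculation above it restricts to a bijection between $\{X^{\vec{a}}\in\Delta(\vec{r})\mid\sigma(X^{\vec{a}})=s\}$ and the set of tuples counted by $\tau^{(\vec{r})}(s)$ in Definition~\ref{defi:tau}. The claim follows immediately.

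There is no real obstacle here; the statement is essentially a reformulation of the definition of $\sigma$ under the change of variables $d_j=p^{r_j}-a_j$, and the proof should fit in a few lines.
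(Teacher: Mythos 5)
Your proof is correct and follows essentially the same route as the paper: both arguments rest on the substitution $d_j=p^{r_j}-a_j$, which turns the condition $\sigma(X^{\vec{a}})=s$ with $0\leq a_j<p^{r_j}$ into $\prod_j d_j=s$ with $1\leq d_j\leq p^{r_j}$. You merely spell out the bijection a bit more explicitly than the paper does.
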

\begin{proof}
  We have $\sigma(X^{\vec{a}})=s$ if and only if $\prod_{i=1}^m(p^{r_i}-a_i)=s$. Since $0\leq a_i<p^{r_i}$, this is equivalent to $\prod_{i=1}^m d_i=s$ for $1\leq d_i\leq p^{r_i}$, proving the proposition.
\end{proof}
Combining Propositions~\ref{prop:tauPrimeNonPrime} and \ref{prop:numDivisors}, we obtain the following immediate corollary.
\begin{corollary}\label{cor:primeDimension}
  Let $q$, $\vec{r}$, and $\delta$ be as in Proposition~\ref{prop:enlargement}. Then \eqref{eq:enlargeParameters} gives the true dimension if and only if $\delta-1$ is a prime number.
  If $\delta-1$ is not a prime, the bound on the dimension may be increased by $\binom{K}{2}$ if $\delta-1$ is a square number and by $K(K-1)$ otherwise.
\end{corollary}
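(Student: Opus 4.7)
The plan is to trace the role of $K$ through Proposition~\ref{prop:enlargement} and then appeal directly to the counting established in Propositions~\ref{prop:tauPrimeNonPrime} and \ref{prop:numDivisors}. The key observation is that Theorem~\ref{thm:qArySteane} produces a quantum code whose dimension is \emph{exactly} $k + k' - n$, so any slack in the lower bound $2k - n + K$ of \eqref{eq:enlargeParameters} comes solely from a possible underestimate of $k' - k$, where $k' = \dim C(L(\delta-1))$.

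First I would write $L(\delta-1) = L(\delta) \cup \{X^{\vec{a}} \in \Delta(\vec{r}) \mid \sigma(X^{\vec{a}}) = \delta - 1\}$, which is a disjoint union. Since $\dim C(L) = |L|$ by Proposition~\ref{prop:dimAndDist}, this yields
\begin{equation*}
  k' - k \;=\; \bigl|\{X^{\vec{a}} \in \Delta(\vec{r}) \mid \sigma(X^{\vec{a}}) = \delta-1\}\bigr| \;=\; \tau^{(\vec{r})}(\delta-1),
\end{equation*}
where the last equality is Proposition~\ref{prop:numDivisors}. Hence the true dimension of the enlarged quantum code is $2k - n + \tau^{(\vec{r})}(\delta-1)$, and \eqref{eq:enlargeParameters} is tight precisely when $\tau^{(\vec{r})}(\delta-1) = K$.

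Next I would simply invoke Proposition~\ref{prop:tauPrimeNonPrime} on $s = \delta - 1$. The assumption $2 < \delta \leq p^{r_2} + 1$ from Proposition~\ref{prop:enlargement} gives $1 < \delta - 1 \leq p^{r_2}$, which ensures $K \geq 2$ so that $\binom{K}{2}$ and $K(K-1)$ are strictly positive. The three cases then read off directly: if $\delta - 1$ is prime, $\tau^{(\vec{r})}(\delta-1) = K$ exactly, giving the claimed equivalence; if $\delta - 1$ is a (non-prime) square, $\tau^{(\vec{r})}(\delta-1) \geq K + \binom{K}{2}$, so the dimension bound improves by at least $\binom{K}{2}$; and if $\delta - 1$ is neither prime nor a square, $\tau^{(\vec{r})}(\delta-1) \geq K + K(K-1)$, improving the bound by at least $K(K-1)$.

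There is no real obstacle here — the content of the corollary is already distilled in the two preceding propositions, and the proof amounts to identifying the dimension increase from enlargement with $\tau^{(\vec{r})}(\delta-1)$ and quoting the case analysis. The only point that merits a brief sentence is checking $K \geq 2$, which is needed to convert the non-prime bounds into a \emph{strict} improvement and thereby justify the ``only if'' direction of the equivalence.
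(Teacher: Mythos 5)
Your proposal is correct and follows essentially the same route as the paper, which presents this as an immediate consequence of combining Propositions~\ref{prop:tauPrimeNonPrime} and \ref{prop:numDivisors}: the dimension increase under enlargement equals $k'-k=\tau^{(\vec{r})}(\delta-1)$, and the case analysis on $\delta-1$ being prime, square, or neither gives the three claims. Your explicit remarks that the $K$ of Proposition~\ref{prop:enlargement} coincides with the $K$ of Proposition~\ref{prop:tauPrimeNonPrime} for $s=\delta-1$, and that $K\geq 2$ makes the non-prime lower bounds strictly exceed $K$ (yielding the ``only if'' direction), are exactly the details the paper leaves implicit.
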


\begin{example}\label{ex:q9cont}
  We now return to the codes in Example~\ref{ex:q9}. In the case of $C(L(4))$, we saw that Proposition~\ref{prop:enlargement} gave the true minimal distance. Having established Corollary~\ref{cor:primeDimension}, we now know that this is no coincidence since $\delta-1=3$ is a prime number.

  For the code $C(L(7))$, $\delta-1=6$ is neither prime nor square. Consequently, Corollary~\ref{cor:primeDimension} tells us that the dimension must increase by at least $K^2=2^2=4$, which is $2$ more than the bound from Proposition~\ref{prop:enlargement}. Both bounds are, however, still smaller than the true value of $8$.
\end{example}
Since it may not be obvious how to compute $\tau^{(\vec{r})}$, we give the following recursive algorithm. Its correctness can be shown by a simple inductive argument.
\begin{algorithm}\label{alg:tau}
  On input $\vec{r}=(r_1,r_2,\ldots,r_m)$ and $s\in\mathbb{Z}_+$, this algorithm computes $\tau^{(\vec{r})}(s)$:
  \begin{enumerate}
    \item Check if $\vec{r}$ is a single value $r_1$. If this is the case, return $1$ if $s\leq r_1$, and $0$ otherwise.
    \item Initialize a counter variable $c:=0$
    \item For each integer $d\in\{1,2,\ldots,p^{r_1}\}$ with $d\mid s$, do the following:
    \begin{itemize}
      \item Let $\vec{r}'=(r_2,r_3,\ldots,r_m)$ and compute $\tau^{(\vec{r}')}(s/d)$.
      \item Update $c$ to be $c:=c+\tau^{(\vec{r}')}(s/d)$
    \end{itemize}
    \item Return $c$
  \end{enumerate}
\end{algorithm}
Since the number of $d$'s considered in Algorithm~\ref{alg:tau} is $\prod_{i=1}^{m-1}p^{r_i}=n/p^{r_m}$, the total number of operations is $\mathcal{O}(n/p^{r_m})$. This is a factor $p^{r_m}$ better than considering all $X^{\vec{a}}\in\Delta(\vec{r})$ and counting the ones with $\sigma(X^{\vec{a}})=s$. We collect these observations on Algorithm~\ref{alg:tau} and its relation to Proposition~\ref{prop:tauPrimeNonPrime} in the following proposition.
\begin{proposition}\label{prop:algorithm}
  Let $q,\vec{r},\delta$, and $K$ be as in Proposition~\ref{prop:enlargement}. Algorithm~\ref{alg:tau} correctly computes $\tau^{(\vec{r})}(\delta)=K$ in $\mathcal{O}(n/p^{r_m})$ operations, where $n=\prod_{i=1}^m p^{r_i}$.
\end{proposition}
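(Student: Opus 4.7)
The claim splits into three parts: correctness of Algorithm~\ref{alg:tau}, its running time bound, and the identification of the computed value as $K$. I would handle correctness by induction on the length $m$ of $\vec{r}$. The base case $m=1$ is immediate from Definition~\ref{defi:tau}: the only candidate tuple is $(s)$, which is admissible precisely when $s\leq p^{r_1}$, so step~1 returns the correct value. For the inductive step, every tuple $(d_1,\ldots,d_m)$ counted by $\tau^{(\vec{r})}(s)$ is uniquely determined by its first entry $d_1=d$, where $d$ must satisfy $d\mid s$ and $1\leq d\leq p^{r_1}$, together with a tuple $(d_2,\ldots,d_m)$ counted by $\tau^{(\vec{r}')}(s/d)$ for $\vec{r}'=(r_2,\ldots,r_m)$. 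This bijection yields the recursion
\begin{equation*}
  \tau^{(\vec{r})}(s)=\sum_{\substack{d\mid s\\ d\leq p^{r_1}}}\tau^{(\vec{r}')}(s/d),
\end{equation*}
which is exactly what step~3 accumulates into the counter $c$ before step~4 returns it.

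For the running time I would analyse the recursion tree. Any root-to-leaf path corresponds to a tuple $(d_1,d_2,\ldots,d_{m-1})$ with $d_i\leq p^{r_i}$ at each level, so the number of leaves is bounded by $\prod_{i=1}^{m-1}p^{r_i}=n/p^{r_m}$. Since every node performs $\mathcal{O}(1)$ work beyond its recursive calls (essentially a divisibility check and an addition), summing over the tree yields the stated $\mathcal{O}(n/p^{r_m})$ bound.

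Finally, for the identity $\tau^{(\vec{r})}(\delta)=K$, I would invoke Proposition~\ref{prop:tauPrimeNonPrime} together with the structural observation that underlies Proposition~\ref{prop:enlargement}: in the prime case (which is where Corollary~\ref{cor:primeDimension} identifies the dimension count exactly), the admissible tuples are precisely those with a single entry greater than $1$, and the ordering $r_1\geq r_2\geq\cdots\geq r_m$ shows that the number of coordinate positions that can host such an entry is exactly $K$. The main obstacle I anticipate is the complexity bound: a naive enumeration over tuples of bounded entries gives only $\mathcal{O}(n)$, so one must carefully exploit the divisibility constraint $d\mid s$ at each recursion level to prune the tree down to $\mathcal{O}(n/p^{r_m})$ nodes, and converting this pruning argument into a rigorous count rather than a heuristic is the most delicate step.
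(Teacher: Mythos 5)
Your argument is correct and is essentially the paper's own, which gives no formal proof beyond remarking that correctness follows by ``a simple inductive argument'' and that the total number of $d$'s considered across the recursion is $\prod_{i=1}^{m-1}p^{r_i}=n/p^{r_m}$; your induction on $m$ and your recursion-tree count are exactly these two observations made explicit, and your appeal to Proposition~\ref{prop:tauPrimeNonPrime} for the value $K$ matches the paper's. One minor remark: the factor-$p^{r_m}$ saving comes simply from the recursion having depth $m-1$ (the last coordinate is never enumerated, only tested in the base case), not from the divisibility pruning you worry about in your closing paragraph, so no delicate pruning argument is actually needed.
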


\subsection{Examples of parameters}\label{sec:exampleParams}
To conclude our exposition, we give concrete parameters of Steane-enlarged codes in several examples. For each code presented here, we will compare it to the Gilbert-Varshamov bound from \cite{FengMa}.
\begin{theorem}\label{thm:gv}
  Let $n>k\geq 2$ with $n\equiv k\pmod{2}$, and let $d\geq 2$. Then there exists a pure stabilizer quantum code $[[n,k,d]]_q$ if the inequality
  \begin{equation}\label{eq:gvBound}
    \sum_{i=1}^{d-1}(q^2-1)^i\binom{n}{i}<q^{n-k+2}-1
  \end{equation}
  is satisfied.
\end{theorem}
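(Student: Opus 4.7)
The plan is to translate the existence statement into the language of classical symplectic codes and then run a standard Gilbert--Varshamov style counting argument. Recall that an $\mathbb{F}_q$-linear pure $[[n,k,d]]_q$ stabilizer code is equivalent to an $\mathbb{F}_q$-subspace $C\subseteq\mathbb{F}_q^{2n}$ of dimension $n-k$ which is self-orthogonal under the symplectic (trace) inner product and whose symplectic dual $C^\perp$ has no non-zero vector of symplectic weight less than $d$. Since the symplectic weight of $(a\mid b)$ counts those indices $i$ with $(a_i,b_i)\neq(0,0)$, the number of vectors in $\mathbb{F}_q^{2n}$ of symplectic weight exactly $i$ equals $(q^2-1)^i\binom{n}{i}$, so the left-hand side of \eqref{eq:gvBound} is exactly the count of non-zero vectors of symplectic weight below $d$. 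The parity condition $n\equiv k\pmod{2}$ is what guarantees that isotropic subspaces of dimension $n-k$ exist inside the non-degenerate $2n$-dimensional symplectic space.

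With this reformulation in hand, I would run an averaging argument. Let $N$ denote the number of self-orthogonal $(n-k)$-dimensional subspaces $C$ of $\mathbb{F}_q^{2n}$, and for each non-zero $v$ let $M(v)$ denote the number of such $C$ satisfying $v\in C^\perp$, equivalently $C\subseteq\{v\}^\perp$. Since the symplectic group acts transitively on non-zero vectors of $\mathbb{F}_q^{2n}$, the number $M(v)$ is independent of $v$; the double count $\sum_{v\neq 0}M(v)=\sum_{C}|C^\perp\setminus\{0\}|=N(q^{n+k}-1)$ expresses this common value $M$ in terms of $N$. If one can show that $N/M\geq q^{n-k+2}-1$, then the hypothesis \eqref{eq:gvBound} forces the expected number of low-weight vectors in the symplectic dual of a randomly chosen self-orthogonal $C$ to be strictly less than one, and hence guarantees the existence of a single $C$ whose dual has no non-zero vector of weight below $d$. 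Setting $\mathcal{C}=C^\perp$ gives the required pure $[[n,k,d]]_q$ stabilizer code.

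The principal technical step is establishing the ratio $N/M\geq q^{n-k+2}-1$, which is sharper than what the naive transitivity identity alone gives. This is the contribution of Feng and Ma: one refines the uniform averaging either by restricting to a suitable orbit of the symplectic group on isotropic flags, or, equivalently, by constructing $C$ greedily one generator at a time, observing that at each intermediate step every potentially low-weight dual vector rules out at most a hyperplane of choices and that the worst step is the last one, where the remaining ambient dimension is $n-k+2$. The main obstacle is precisely this clean simplification to the closed form $q^{n-k+2}-1$ rather than a cumbersome product over the intermediate steps; once it is in place, the inequality \eqref{eq:gvBound} follows immediately as the sufficient condition for the averaging to succeed.
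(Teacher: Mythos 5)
The paper itself contains no proof of Theorem~\ref{thm:gv}: it is quoted verbatim from Feng and Ma \cite{FengMa} and is used in Section~\ref{sec:exampleParams} only as a benchmark against which to measure code parameters. Your proposal therefore has to stand on its own as a reconstruction of the Feng--Ma argument, and as such it has a genuine gap. The symplectic reformulation and the count of $(q^2-1)^i\binom{n}{i}$ vectors of symplectic weight $i$ are correct, but the quantitative core is missing, and the one concrete claim you make about it is false. With your own definitions, the double count gives $M=N(q^{n+k}-1)/(q^{2n}-1)$, so $N/M=(q^{2n}-1)/(q^{n+k}-1)<q^{n-k}+1\leq q^{n-k+2}-1$ for every $q\geq 2$ and $n>k\geq 2$. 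Hence the inequality $N/M\geq q^{n-k+2}-1$ that you propose to establish cannot hold: uniform averaging over all self-orthogonal $(n-k)$-dimensional subspaces provably falls short of the stated bound by a factor of roughly $q^2$, and no ``refinement'' of that same ratio can close the gap -- one must count a different family of objects, which is exactly the content of the Feng--Ma lemma you are deferring to.

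The fallback sketches do not repair this. The greedy generator-by-generator construction is mis-described: when the $(n-k)$-th generator of $C$ is chosen inside the symplectic perp of the previously chosen ones, the ambient dimension is $2n-(n-k-1)=n+k+1$, not $n-k+2$, so the ``worst last step'' heuristic does not produce the claimed threshold. The explanation of the parity hypothesis is also incorrect: a nondegenerate symplectic space of dimension $2n$ contains totally isotropic subspaces of every dimension up to $n$ regardless of the parity of $n-k$, so $n\equiv k\pmod 2$ is not needed for the existence of the subspace $C$; it enters the Feng--Ma counting for a different reason. In short, the proposal carries out the routine bookkeeping and explicitly postpones precisely the step that makes the theorem true, while asserting an intermediate inequality that is demonstrably false in the form stated.
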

In the same way as \cite{MTT16}, we will use the notation $[[n,k,d]]_q\gvMark$ in the following to indicate that the parameters $(n,k,d)$ exceed the Gilbert-Varshamov bound -- i.e. that \eqref{eq:gvBound} is not satisfied -- and we will write $[[n,k,d]]_q\gvMark*$ if $(n,k,d)$ satisfies \eqref{eq:gvBound}, but $(n,k,d+1)$ does not.
This is only possible for $n\equiv k\pmod{2}$, which is always the case for CSS-codes from self-orthogonal codes, but not necessarily for Steane-enlarged codes. Thus, for code parameters $(n,k,d)$ with $n\not\equiv k\pmod{2}$, we will use the same notation, albeit with the bound applied to the parameters $(n,k-1,d)$.
There is another bound, \cite[Cor.\,4.3]{JinXing}, which covers all values of $n$ and $k$. For the parameters presented in the current work, however, that bound is weaker than \eqref{eq:gvBound}, and several of the codes in the examples below exceed \cite[Cor.\,4.3]{JinXing} but not Theorem~\ref{thm:gv}. For this reason, we shall use Theorem~\ref{thm:gv} throughout.

In addition to the Gilbert-Varshamov bound, we will refer to the quantum Singleton bound in some cases. This bound is
\begin{equation}\label{eq:singleton}
  2d\leq n-k+2,
\end{equation}
and its proof can be found in \cite{KillLaflamme,rains}.
\begin{example}
  This is a continuation of Examples~\ref{ex:q9} and \ref{ex:q9cont}. When compared with the Gilbert-Varshamov bound, Theorem~\ref{thm:gv}, the CSS-code with parameters $[[243,229,4]]_9\gvMark*$ and the Steane-enlarged code with parameters $[[243,232,4]]_9\gvMark*$ meet the bound, whereas the two codes of minimal distance $7$ neither meet nor exceed the bound.
\end{example}

\begin{example}
  Consider $q=3^2=9$ and $\vec{r}=(2,1)$ as in Figure~\ref{fig:example2D}. Here, Proposition~\ref{prop:enlargement} guarantees that we can enlarge the CSS-codes $[[27, 21, 3]]_9\gvMark*$ and $[[27, 17, 4]]_9\gvMark*$ to codes of parameters $[[27, 23, \geq 3]]_9\gvMark$ and $[[27, 19, \geq 4]]_9\gvMark*$, respectively. Furthermore, Corollary~\ref{cor:primeDimension} ensures that these are the true dimensions. In fact, the code $[[27, 23, 3]]_9\gvMark$ is optimal since it meets the Singleton-bound \eqref{eq:singleton}.

  There are two additional Steane-enlarged codes that are not captured by Proposition~\ref{prop:enlargement}. These are $[[27, 13, 5]]_9$ enlarged to $[[27, 15, \geq 5]]_9\gvMark*$, and $[[27, 5, 7]]_9$ enlarged to $[[27, 8, \geq 7]]_9$, where the increases in dimension have been computed using Algorithm~\ref{alg:tau}. In both cases, the technique in Proposition~\ref{prop:enlargement} fails because $\delta>4=p^{r_2}+1$.
\end{example}

\begin{table}[b]
  \centering
  \footnotesize
  \begin{tabular}{ccccc}
    \toprule
    \multicolumn{2}{c}{\textbf{Construction}}&\multicolumn{3}{c}{\textbf{Dimension increase}}\\
    \cmidrule(lr){1-2}\cmidrule(lr){3-5}
\textbf{Thm.~\ref{thm:cssSelfOrth}} & \textbf{Thm.~\ref{thm:qArySteane}} & \textbf{Prop.~\ref{prop:enlargement}} & \textbf{Cor.~\ref{cor:primeDimension}} & \textbf{Prop.~\ref{prop:algorithm}}\\
    \midrule
    $[[64, 58, 3]]_8\gvMark*$	& $[[64, 60, 3]]_8\gvMark$ & $2$ & $2$\trueMark & $2$\\
    $[[64, 54, 4]]_8\gvMark*$	& $[[64, 56, 4]]_8\gvMark$ & $2$ & $2$\trueMark & $2$\\
    $[[64, 48, 5]]_8$	& $[[64, 51, 5]]_8\gvMark*$ & $2$ & $3$\phMark & $3$\\
    $[[64, 44, 6]]_8$	& $[[64, 46, 6]]_8\gvMark*$ & $2$ & $2$\trueMark & $2$\\
    $[[64, 36, 7]]_8$	& $[[64, 40, 7]]_8$ & $2$ & $4$\phMark & $4$\\
    $[[64, 32, 8]]_8$	& $[[64, 34, 8]]_8$ & $2$ & $2$\trueMark & $2$\\
    \bottomrule
  \end{tabular}
  \caption{Steane-enlarged codes with $q=2^3=8$ and $\vec{r}=(3,3)$. Additional information on how to read the table may be found in Example~\ref{ex:tables}.}
  \label{tab:q8}
\end{table}

\begin{table}[b]
  \centering
  \footnotesize
  \begin{tabular}{ccccc}
    \toprule
    \multicolumn{2}{c}{\textbf{Construction}}&\multicolumn{3}{c}{\textbf{Dimension increase}}\\
    \cmidrule(lr){1-2}\cmidrule(lr){3-5}
    \textbf{Thm.~\ref{thm:cssSelfOrth}} & \textbf{Thm.~\ref{thm:qArySteane}} & \textbf{Prop.~\ref{prop:enlargement}} & \textbf{Cor.~\ref{cor:primeDimension}} & \textbf{Prop.~\ref{prop:algorithm}}\\
    \midrule
    $[[625, 615, 3]]_5\gvMark*$ & $[[625, 619, 3]]_5\gvMark$ & $4$ & $4$\trueMark & $4$\\
    $[[625, 607, 4]]_5\gvMark*$ & $[[625, 611, 4]]_5\gvMark$ & $4$ & $4$\trueMark & $4$\\
    $[[625, 587, 5]]_5$ & $[[625, 597, 5]]_5$ & $4$ & $8$\phMark & $10$\\
    $[[625, 579, 6]]_5$ & $[[625, 583, 6]]_5$ & $4$ & $4$\trueMark & $4$\\
    \bottomrule
  \end{tabular}
  \caption{Steane-enlarged codes with $q=5$ and $\vec{r}=(1,1,1,1)$. Additional information on how to read the table may be found in Example~\ref{ex:tables}.}
  \label{tab:q5}
\end{table}

\begin{example}\label{ex:tables}
  In Tables~\ref{tab:q8}--\ref{tab:q64}, we list parameters of quantum codes in various cases where Proposition~\ref{prop:enlargement} guarantees that enlargement is possible. The tables contain both the original CSS-code and its Steane-enlarged code along with the predicted dimension increases from Proposition~\ref{prop:enlargement} and Corollary~\ref{cor:primeDimension}.
  
  In these tables, the first column shows the parameters of quantum codes obtained by applying Theorem~\ref{thm:cssSelfOrth} to self-orthogonal codes of the form $\mathcal{C}=C(L(\delta))$. The second column shows the results of enlarging the codes in the first column using $\mathcal{C}'=C(L(\delta-1))$ in Theorem~\ref{thm:qArySteane}. The third column gives the dimension increase guaranteed by Proposition~\ref{prop:enlargement}, and the fourth shows the bound provided by Corollary~\ref{cor:primeDimension}. Any number marked with an asterisk is \emph{known} to be the true value since $\delta-1$ is a prime. The final column shows the actual increase as computed by Algorithm~\ref{alg:tau}.

  The codes in Table~\ref{tab:q8} have better parameters than those in \cite[Tables\,1 and 2]{LaGuardiaAlves} for small values of $\delta$. More concretely, \cite{LaGuardiaAlves} lists codes with parameters $[[63,57,\geq 3]]_8\gvMark*$, $[[63,53,\geq 4]]_8\gvMark*$, $[[63,49,\geq 5]]_8\gvMark*$, and $[[63,45,\geq 6]]_8\gvMark*$. For larger values of $\delta$, however, \cite{LaGuardiaAlves} outperforms the codes in Table~\ref{tab:q8}. Likewise, for $\delta=3,4$ the parameters of the codes in Table~\ref{tab:q5} surpass those presented in \cite[Tables\,1 and 2]{LaguardiaPalazzo}. There, codes with parameters $[[624,614,\geq 3]]_5\gvMark*$ and $[[624,606,\geq 4]]_5\gvMark*$ are given. As before, \cite{LaguardiaPalazzo} also contains codes with higher minimal distances that have better parameters than the corresponding codes obtained in this work.
  All the codes in Tables~\ref{tab:q8} and \ref{tab:q5} have with $q=p$ and $r_i=1$, which are in fact special cases of hyperbolic codes. It seems to be a general pattern for such codes, that the Steane-enlargements with small distances outperform the codes in \cite{LaGuardiaAlves,LaguardiaPalazzo}, but that this relation is reversed for larger distances.

  Additionally, the codes in Table~\ref{tab:q8} have favourable parameters compared to the codes from \cite[Ex.\,5]{MTT16} defined from the Suzuki curve. Specifically, the codes in \cite{MTT16} have parameters $[[64,54,3]]_8$, $[[64,52,4]]_8\gvMark*$, $[[64,42,5]]_8$, $[[64,40,6]]_8$, $[[64,38,7]]_8$, and $[[64,36,8]]_8$, which are all worse than those in Table~\ref{tab:q8} except the one with distance $8$. As a final remark, the code with parameters $[[64, 60, 3]]_8\gvMark$ meets the quantum Singleton bound \eqref{eq:singleton}.

  The codes in Tables~\ref{tab:q16} and \ref{tab:q64} have parameters that cannot be achieved using the method from \cite{LaGuardiaAlves,LaguardiaPalazzo} since those codes all have lengths $q^m-1$ for some $m\geq 2$, where $q$ is the field size. Studying the tables, it is also evident that Corollary~\ref{cor:primeDimension} provides a better bound for the dimension than Proposition~\ref{prop:enlargement}, but that the actual increase in dimension may be significantly higher. In any case, however, Proposition~\ref{prop:algorithm} ensures that the true increase can be computed using Algorithm~\ref{alg:tau}.

  Among the codes presented here, two were MDS-codes: $[[27, 23, 3]]_9\gvMark$ and $[[64, 60, 3]]_8\gvMark$.
  From recent work \cite[Cor.\,3.10]{LMS19} the same lengths, dimensions, and minimal distances can be achieved, but the field size is much larger. In particular, they require $q>n$ so the corresponding field sizes are at least $29$ and $67$, respectively.
\end{example}

\begin{table}
  \centering
  \footnotesize
  \begin{tabular}{ccccc}
    \toprule
    \multicolumn{2}{c}{\textbf{Construction}}&\multicolumn{3}{c}{\textbf{Dimension increase}}\\
    \cmidrule(lr){1-2}\cmidrule(lr){3-5}
    \textbf{Thm.~\ref{thm:cssSelfOrth}} & \textbf{Thm.~\ref{thm:qArySteane}} & \textbf{Prop.~\ref{prop:enlargement}} & \textbf{Cor.~\ref{cor:primeDimension}} & \textbf{Prop.~\ref{prop:algorithm}}\\
    \midrule
    $[[1024, 1016, \ph3]]_{16}\gvMark*$   & $[[1024, 1019, \ph3]]_{16}\gvMark$   & $3$ & $3$\trueMark & $3$  \\
    $[[1024, 1010, \ph4]]_{16}\gvMark*$   & $[[1024, 1013, \ph4]]_{16}\gvMark*$   & $3$ & $3$\trueMark & $3$  \\
    $[[1024, \ph998, \ph5]]_{16}$ & $[[1024, 1004, \ph5]]_{16}$   & $3$ & $6$\phMark & $6$  \\
    $[[1024, \ph994, \ph6]]_{16}$ & $[[1024, \ph996, \ph6]]_{16}$ & $2$ & $2$\trueMark & $2$  \\
    $[[1024, \ph978, \ph7]]_{16}$ & $[[1024, \ph986, \ph7]]_{16}$ & $2$ & $4$\phMark & $8$  \\
    $[[1024, \ph974, \ph8]]_{16}$ & $[[1024, \ph976, \ph8]]_{16}$ & $2$ & $2$\trueMark & $2$  \\
    $[[1024, \ph956, \ph9]]_{16}$ & $[[1024, \ph965, \ph9]]_{16}$ & $2$ & $4$\phMark & $9$  \\
    $[[1024, \ph946, 10]]_{16}$   & $[[1024, \ph951, 10]]_{16}$   & $2$ & $3$\phMark & $5$  \\
    $[[1024, \ph934, 11]]_{16}$   & $[[1024, \ph940, 11]]_{16}$   & $2$ & $4$\phMark & $6$  \\
    $[[1024, \ph930, 12]]_{16}$   & $[[1024, \ph932, 12]]_{16}$   & $2$ & $2$\trueMark & $2$  \\
    $[[1024, \ph900, 13]]_{16}$   & $[[1024, \ph915, 13]]_{16}$   & $2$ & $4$\phMark & $15$ \\
    $[[1024, \ph896, 14]]_{16}$   & $[[1024, \ph898, 14]]_{16}$   & $2$ & $2$\trueMark & $2$  \\
    $[[1024, \ph884, 15]]_{16}$   & $[[1024, \ph890, 15]]_{16}$   & $2$ & $4$\phMark & $6$  \\
    $[[1024, \ph872, 16]]_{16}$   & $[[1024, \ph878, 16]]_{16}$   & $2$ & $4$\phMark & $6$  \\
    $[[1024, \ph848, 17]]_{16}$   & $[[1024, \ph860, 17]]_{16}$   & $2$ & $3$\phMark & $12$ \\
    \bottomrule
  \end{tabular}
  \caption{Steane-enlarged codes with $q=2^4=16$ and $\vec{r}=(4,4,2)$. Additional information on how to read the table may be found in Example~\ref{ex:tables}.}
  \label{tab:q16}
\end{table}

\begin{table}
  \centering
  \footnotesize
  \begin{tabular}{ccccc}
    \toprule
\multicolumn{2}{c}{\textbf{Construction}}&\multicolumn{3}{c}{\textbf{Dimension increase}}\\
    \cmidrule(lr){1-2}\cmidrule(lr){3-5}
    \textbf{Thm.~\ref{thm:cssSelfOrth}} & \textbf{Thm.~\ref{thm:qArySteane}} & \textbf{Prop.~\ref{prop:enlargement}} & \textbf{Cor.~\ref{cor:primeDimension}} & \textbf{Prop.~\ref{prop:algorithm}}\\
    \midrule
    $[[1024, 1014, 3]]_8\gvMark*$ & $[[1024, 1018, 3]]_8\gvMark$  & $4$ & $4$\trueMark & $4$  \\
    $[[1024, 1008, 4]]_8\gvMark*$ & $[[1024, 1011, 4]]_8\gvMark*$ & $3$ & $3$\trueMark & $3$  \\
    $[[1024, \ph990, 5]]_8$       & $[[1024, \ph999, 5]]_8$       & $3$ & $6$\phMark   & $9$  \\
    $[[1024, \ph984, 6]]_8$       & $[[1024, \ph987, 6]]_8$       & $3$ & $3$\trueMark & $3$  \\
    $[[1024, \ph960, 7]]_8$       & $[[1024, \ph972, 7]]_8$       & $3$ & $9$\phMark   & $12$ \\
    $[[1024, \ph954, 8]]_8$       & $[[1024, \ph957, 8]]_8$       & $3$ & $3$\trueMark & $3$  \\
    $[[1024, \ph922, 9]]_8$       & $[[1024, \ph938, 9]]_8$       & $3$ & $9$\phMark   & $16$ \\
    \bottomrule 
  \end{tabular}
  \caption{Steane-enlarged codes with $q=2^3=8$ and $\vec{r}=(3,3,3,1)$. Additional information on how to read the table may be found in Example~\ref{ex:tables}.}
  \label{tab:q64}
\end{table}

\section*{Acknowledgements}
The authors express their gratitude to Diego Ruano for delightful discussions in relation to this work.

\section*{References}\renewcommand{\chapter}[2]{\footnotesize}

\end{document}